\newcommand{\Gr}{\mathcal{G}}
\begin{document}

\title{The Social Climbing Game
}


\author{Marco Bardoscia  \and Giancarlo De Luca \and \\ Giacomo Livan \and Matteo Marsili \and Claudio J. Tessone
}


\institute{M. Bardoscia \and G. Livan \and M. Marsili \at
	       Abdus Salam International Centre for Theoretical Physics, Strada Costiera 11, 34151 Trieste, Italy \\
	       \email{marco.bardoscia@ictp.it} \\
	       \email{glivan@ictp.it} \\
	       \email{marsili@ictp.it}
	       \and
	       G. De Luca \at
	       SISSA, International School for Advanced Studies, via Bonomea 265, 34136 Trieste, Italy \\
	       \email{giancarlo.deluca@sissa.it}
	       \and
	       C. J. Tessone \at
	       Chair of Systems Design ETH Zurich Kreuzplatz 5, CH-8032 Zurich \\
	       \email{tessonec@ethz.ch}
}

\date{Received: date / Accepted: date}

\maketitle

\begin{abstract}
The structure of societies depends, to some extent, on the incentives of the individuals they are composed of. We study a stylized model of this interplay, that suggests that the more individuals aim at climbing the social hierarchy, the more society's hierarchy gets strong. Such a dependence is sharp, in the sense that a persistent hierarchical order emerges abruptly when the preference for social status gets larger than a threshold. This phase transition has its origin in the fact that the presence of a well defined hierarchy allows agents to climb it, thus reinforcing it, whereas in a ``disordered'' society it is harder for agents to find out whom they should connect to in order to become more central. Interestingly, a social order emerges when agents strive harder to climb society and it results in a state of reduced social mobility, as a consequence of ergodicity breaking, where climbing is more difficult.
\end{abstract}

\section{Introduction} \label{sec:intro}

The emergence of social elites has interested social scientists ever since Pareto's observation of persistent inequalities in our societies~\cite{Pareto}.
Inequality is acceptable if it results from differences of individuals in terms of their capabilities, but not if it results, in one way or another, from discrimination\footnote{India's cast system or racial segregation in the US and South Africa in the last century, are examples of explicit discrimination of underprivileged groups, that in the course of time has come to be regarded more and more as unacceptable, prompting for  explicit measures of {\em affirmative action} (e.g.  quotas for lower casts in India).}.
Not only discrimination conflicts with ethical principles that all individuals are {\em a priori} equal and should have access to the same opportunities. It also damages societies in terms of efficiency \cite{ASen} as it hampers social mobility, preventing society from promoting individuals to positions in the social hierarchy that are consistent with their capabilities.

We introduce the {\em social climbing game}, a highly stylized model of a society, where individuals attempt to optimize their position in the network, by becoming as central as possible. The assumptions of the model are rooted on empirical and theoretical evidence coming from the social sciences.
There, in the early years of network analysis, it was found that the importance of an individual within a network is related to some quantification of
how {\em central} \cite{bavelas48,wasserman94} this agent is.
There exist different metrics which measure the centrality of a node (among others: degree, betweenness, closeness, eigenvector) \cite{freeman78}, each one highlighting different facets of this generic concept.

Among the empirical analyses, there is a body of literature showing that centrality explains the role, importance, or payoffs of the agents constituting the network: in informal structures within organizations, the importance of people is related to their betweenness centrality  \cite{brass84}; students with an higher centrality in the friendship network were found to perform better in education tests \cite{calvo-armengol09}\footnote{A more comprehensive list can be found in Ref.~\cite{koenig09}.}.
From the theoretical side, Ref.~\cite{ballester06} shows that in a broad class of games, player's payoffs increase with their (Bonacich) centrality  \cite{bonacich87} in the network.
Because of this, if individuals can alter their neighborhood, the myopic best response strategy is simply to connect to the neighbor who increases their centrality the most.


Interestingly, K\"oenig {\em et al.}~\cite{koenig09} have shown that when individuals strive to be as central as possible, the exact measure of centrality is irrelevant, and the dynamics yields a network which has the property of {\em nestedness}: the neighborhood of any node contains the neighborhood of the nodes which have a lower degree. In this kind of networks, the ranking of nodes according to their centrality is the same, regardless of the centrality measure considered \cite{koenig11}. 
Remarkably, nested structures have been found in inter-organizational networks of research and development (R\&D) alliances \cite{Uzzi1996,Saavedra2008}, in interbank payment networks \cite{Soramaki2007} and in firm competition under oligopolies \cite{goyal03}. This kind of structures will be precisely the ones emerging in the social climbing game. In this respect, our results confer stability to those of Ref.~\cite{koenig11} and generalize them in non-trivial ways.

This study suggests that the assumption that individual freedom promotes social mobility is a non-trivial one. This is because the structure of a society, while constraining the set of opportunities that are available to individuals, depends on the very incentives of individuals in complex ways.
In this paper we show that this interplay may produce very ``rigid'' societies, with extremely low social mobility, characterized by persistent inequalities between {\em a priori} equal individuals\footnote{The positive relationship between intergenerational social mobility and inequality has been consistently reported in several empirical studies  \cite{WilkinsonPickett2010,BjorklundJantti1997,AndrewsLeigh2009}.}. The understanding of this phenomenon hinges on the concept of ergodicity breaking that occurs in strongly interacting systems, when a symmetry -- here related to the {\em a priori} equality among individuals -- is spontaneously broken. This phenomenon, well known in statistical physics, is an emergent collective property, and it manifests only when the system is large enough.
Remarkably, we find that persistent inequality with low mobility occurs precisely when the quest for ``power'' -- i.e. for occupying the most central or important place in the social hierarchy -- becomes a dominant component of what motivates the behavior of individuals.

In words, our model epitomizes an apparent positive feedback between the intensity of the efforts of individuals to ``climb'' the social hierarchy and the structure of a society: on the one hand, the more a society is hierarchically structured, the easier it is for individuals to understand how to climb it. On the other, the efforts of agents to climb the hierarchy reinforce the social ranking as individuals rewire their links from less to more influential individuals. We discuss this interplay in a highly stylized model of a society, that while being very far from realistic, serves as a proof of concept and allows us to unveil the mechanism responsible for the emergence of a persistent inequality in a transparent manner.

In addition, this approach shows the relevance of 
techniques used in statistical mechanics \cite{Park2004a,Park2004b} in the context of social networks. Similar models to the one considered here have been discussed in Refs.~\cite{Derenyi2004583,lambdamax}) that, however, focus mostly on topological properties of the emerging networks.

The rest of the paper is organized as follows: In Sect.\ \ref{sec:model}, we introduce the model and discuss the main properties of the dynamics and its associated global potential function; related to these results, in Appendix \ref{sec:appendix}, the ergodicity of the system is proved. Later, in Sect.\ \ref{sec:results} we show the results of extensive numerical simulations that portray the characteristic behavior of the system. Finally, in Sect.\ \ref{sec:conclusions}, the conclusions are drawn.

\section{The model} \label{sec:model}

We consider a system composed of $N$ individuals, who are connected through a network which consists of exactly $M$ links. The network is undirected and thus can be specified in terms of a symmetric adjacency matrix $\hat a=\{a_{ij}\}_{i,j=1}^N$, with elements $a_{ij}=a_{ji}=1$, if $i$ and $j$ are connected, $a_{ij}=a_{ji}=0$ otherwise. Agents receive opportunities to use their links in order to get in contact with more ``influential'' members of the society, in brief to climb the social network.

As a measure of importance of the individuals, we take the number of his/her partners\footnote{Other measures of centrality can be taken but, as observed in Ref.~\cite{koenig11}, these rank individuals in the same order in strongly hierarchical networks, that will be stable over time as we shall see later. Conversely, unstructured networks correspond to random rankings with no stable order, with respect to all centrality measures.}, i.e.\ the degree $k_i = \sum_j a_{ij}$. As a measure of the ``social capital'' of agent $i$ we take  the following \emph{local} utility function
\begin{equation} \label{eq:utility}
u_i = \sum_{j,\ell=1}^N a_{ij} a_{j\ell} + \mu \sum_{\ell=1}^N a_{i\ell}= \sum_{j=1}^N a_{ij}k_j + \mu \, k_i,
\end{equation}
that depends both on the centrality $k_i$ of agent $i$ and on the centrality $k_j$ of his/her neighbors, with $\mu$ tuning the relative weight between the two terms\footnote{As will be clear in the following, the second term in (\ref{eq:utility}) is irrelevant for the dynamics, but not for the interpretation of the local utility. For example, consider the limit case of a star: while the central node is connected to $N-1$ nodes, all other nodes have only one connection. In this case the first term in \eqref{eq:utility} is equal to $N-1$ for all nodes and only the term proportional to $k_i$ removes this degeneracy. Note that the second term in \eqref{eq:utility} also describes a linear cost $\mu < 0$ to maintain  links.}
The efforts of agents to climb the social hierarchy can then be formalized in the maximization of the utility $u_i$.

We then define the dynamics as follows,
\begin{enumerate}
  \item At any time, an agent $i$ is picked at random together with one of her neighbors, $j$. Then, a neighbor $\ell$ of $j$ is selected at random, $\ell \neq i$.
  \item If $\ell$ is already connected to $i$, nothing happens. Otherwise, with probability
\begin{equation} \label{eq:prob}
	p_{(i,j)\to (i,\ell)}=\frac{e^{\beta \Delta u_i}}{1+e^{\beta \Delta u_i}},
\end{equation}
the link $(i,j)$ is replaced with (or rewired to) link $(i,\ell)$, 
where $\Delta u_i$ is the corresponding change in $i$'s utility.
\end{enumerate}
The step 1 models random encounters between agents through their network of interactions. In such an encounter, agent $i$ gets to know a friend $\ell$ of $j$, as well as his/her importance (the number $k_\ell$ of $\ell$'s friends). The probabilistic choice rule in step 2 can be derived from a random utility model where agents maximize a more complex utility function, that accounts for the fact that the social network affects in complex ways the well being of individuals and their unobserved choices in other dimensions\footnote{This idea can be precisely formalized assuming that $u_i(\hat a)$ is the observed part of the utility, but that agent $i$ maximize a more complex function $U_i(\hat a,\vec b)=u_i(\hat a)+v_i(\vec b| \hat a)$ where $v_i(\vec b| \hat a)$ is a random unobserved contribution to the utility, that depends on a vector $\vec b$ of unobserved choices. Assuming that $v_i(\vec b| \hat a)$ are independent and identically distributed, it can be shown (See \cite{ChalletMarsiliZhang} p.~33 for an explicit derivation) that $\max_{\vec b} U_i(\hat a,\vec b)=u_i(\hat a) + \eta_i(\hat a) / \beta$, where $\eta_i(\hat a)$ are i.i.d. with a Gumbel distribution. It is well known \cite{mcfadden} that if $\hat a^*$ is the choice that maximizes $u_i(\hat a) + \eta_i(\hat a) / \beta$, then $P\{\hat a^*=\hat a\}$ is given by Eq. \eqref{eq:prob}.}. 
In this view, $\beta$ plays the role of the relative weight between the observed and the unobserved part of the utility in the the choice of social contacts and it reflects the prevalence of the quest for social status in their choice behavior\footnote{For example, Adam may be reluctant to interrupt his relation with Bob, despite his low rank in society, because he is his only friend who shares his interest in Japanese paintings.}. In particular, in the limit $\beta \rightarrow \infty$, a move implying a decrease in the utility function is never accepted. This means that the social status is valued so highly by the agents that everything else is unimportant. On the contrary, for $\beta = 0$ the probability of accepting a move implying a decrease of the utility function is $1/2$, meaning that the social status has negligible importance with respect to the unobserved part of the utility. The general question addressed is then how strong should the parameter $\beta$ be in order for a social hierarchy to form and be maintained in the long run?

It is worth to remark that if the utility of agent $i$ increases when rewiring the link $(i,j)$ to $(i,\ell)$, then the utility of agent $j$ decreases, while that of agent $\ell$ increases. This embodies the fact that the formation of a new link needs the consent of both parties, but their removal can be unilateral. Therefore, we can interpret the rewiring mechanism as a process according to which agent $i$ looks for some social premium (e.g. knowledge of information, professional expertise) that agent $\ell$ can provide more than agent $j$. Once agent $i$ secures his/her connection to agent $\ell$, agent $j$ essentially represents a redundant, less central source of the same capital, and this is why the rewiring operation happens at his/her expenses.
Moreover, the rewiring mechanism described above implies that, in their quest to become central, agents increase the likelihood to be selected by others as new partners.

Notice finally that the number of links is conserved in the dynamics. Hence the density of links is the second important dimension that we shall explore, in order to understand how the structure of social organization depends on it.

\subsection{Properties of the dynamics and potential function}

There are some remarkable features of the dynamics of the model introduced in the previous sections. We detail them now.

First, it is easy to see that the dynamics introduced preserves connected components. Indeed,  nodes are never disconnected by the dynamics because, even if they have just one link, this will not be rewired because the neighbor upstream has no second neighbor where to rewire.
Therefore, without loss of generality, we restrict attention to the case where $M \geq N-1$ and the network is composed of a single connected component. Networks composed of disjoint components remain disjoint under the dynamics above, hence the dynamics of different components can be considered independently. Alternative dynamics that do not preserve connectedness -- e.g. adding the link $(i,\ell)$ to a neighbor $\ell$ of a neighbor $j$, and removing a link different from $(i,j)$ chosen in any way -- would converge to simple structures characterized by cliques of $\sim \sqrt{M}$ nodes in a sea of disconnected nodes. Indeed, it is easy to check that such configurations correspond to absorbing states of the dynamics for all $\beta$. On the other hand, as we shall discuss in a moment, it is precisely the rewiring procedure we propose in Section \ref{sec:model} that produces non-trivial equilibrium states.

Notice that, since both the number $N$ of nodes and $M$ of edges is conserved during the evolution of the system, the number of fundamental cycles in the graph is also conserved. This follows from the fact that the number  of fundamental cycles in a graph is  equal to $M - N + K$, where $K$ is the number of connected components (see ~\cite[Ch.\ 2]{Bollobas1998}).

The dynamics of the model admits a potential which is just the \emph{global} utility, i.e.~the sum of the utilities $U = \sum_i u_i$.
Indeed, let us consider the change $\Delta u_x$ in the utility of the agent $x$ when the rewiring $(i,j)$ into $(i,\ell)$ occurs. Depending on the position of $x$ in the network, the following changes are obtained:
\begin{subequations}
\begin{align} \label{eq:delta_local_utility}
	\Delta u_i & = k_\ell - k_j + 1 \\
	\Delta u_j & = 1 - k_i - \mu \\
	\Delta u_\ell & = k_i - 1 + \mu\\
	\Delta u_h & = -1 \qquad \forall h \in \partial j \setminus{\{i, \ell\}} \\
	\Delta u_g & = +1 \qquad \forall g \in \partial \ell \setminus{\{j\}} \\
	\Delta u_x & = +0 \qquad \forall x \neq i, j, \ell, x \notin \partial j\cup \partial \ell \, ,
\end{align}
\end{subequations}
where $\partial x$ is the set of the neighbors of $x$, before the move.

In the total variation of the utility $\Delta U=\sum_x\Delta u_x$, the term $\Delta u_h$ appears $k_j - 2$ times, while the term $\Delta u_g$ appears $k_\ell - 1$ times, because $k_x$ is the degree of the node $x$ \emph{before} the rewiring. Gathering all the contributions one has:
\begin{equation} \label{eq:delta_utility}
\begin{split}
	\Delta U & = \Delta u_i + \Delta u_j + \Delta u_\ell + (k_j - 2) \Delta u_h + (k_\ell - 1) \Delta u_g \\
	         & = 2 (k_\ell - k_j + 1) = 2 \Delta u_i \, .
\end{split}
\end{equation}
The last point implies that, provided the dynamics is ergodic, which is proven in Appendix  \ref{sec:appendix}, the system converges to thermal equilibrium with Hamiltonian
$$
\mathcal{H} = -U = -\sum_i k_i^2 - \mu \sum_i k_i,
$$
and fixed density of links at temperature $2/\beta$\footnote{The factor 2 comes from the fact that the variation of the global utility is the double of the variation of the local utility.}. Notice that the second term does not play any role, being $\sum_i k_i$ a fixed quantity in our case. Indeed the dynamics in Eq.~(\ref{eq:prob}) is equivalent to Metropolis dynamics, and hence it 
samples the Gibbs distribution $P\{\hat a\}\propto e^{\beta U(\hat a)/2}$, which is known in sociology as the 2-star model. 
Park and Newman~\cite{Park2004a,Park2004b}, have shown that the 2-star model where the density of links is not fixed, exhibits a sharp phase transition. This result suggests that there might be a phase transition also in the model we study in this paper. As a byproduct, our discussion also provides a microeconomic derivation for the 2-star model\footnote{The case studied in ref.~\cite{Park2004a,Park2004b} where the number of links is also allowed to change, can be recovered in a model where, in addition to rewiring steps discussed above, we also allow for link creation upon random encounters and link obsolescence (i.e. decay). More precisely, 
consider a model where each agent receives opportunities {\em i)} to rewire his/her links (as above) at rate $\nu$ and {\em ii)} to form new links (with randomly chosen agents), with rate $\eta/2$. In addition, each link decays with rate $1$. Then, in a time interval $\Delta  t$, the number of links changes by $\Delta M=\eta N\Delta t-M\Delta t$, which means that in the stationary state $\langle M\rangle=\eta N$.}.

\section{Numerical simulations} \label{sec:results}

In order to investigate the behavior of the model, we performed extensive numerical simulations sampling the Gibbs distribution $P\{\hat a\}\propto e^{\beta U(\hat a)/2}$ using the Metropolis algorithm based on the rewiring moves introduced in Sect.\ \ref{sec:model}. All the results to be presented throughout the rest of this section were obtained, for each value of $\beta$, by performing $R $ rewiring proposals per node, and we checked that the value $R = 5 \cdot 10^5$ is large enough to always ensure the attainment of an equilibrium state. Fig.\ \ref{fig:nets} shows two typical realizations of the social network for small and large values of $\beta$ (see caption for more details). Fig.\ \ref{fig:nets} suggests that, as anticipated in the previous section, the social climbing model undergoes a transition from hierarchical to random structures. In the following, we will show the presence of a phase transition between these two states.

\begin{figure}
	\centering
  	\includegraphics[width=0.48\columnwidth]{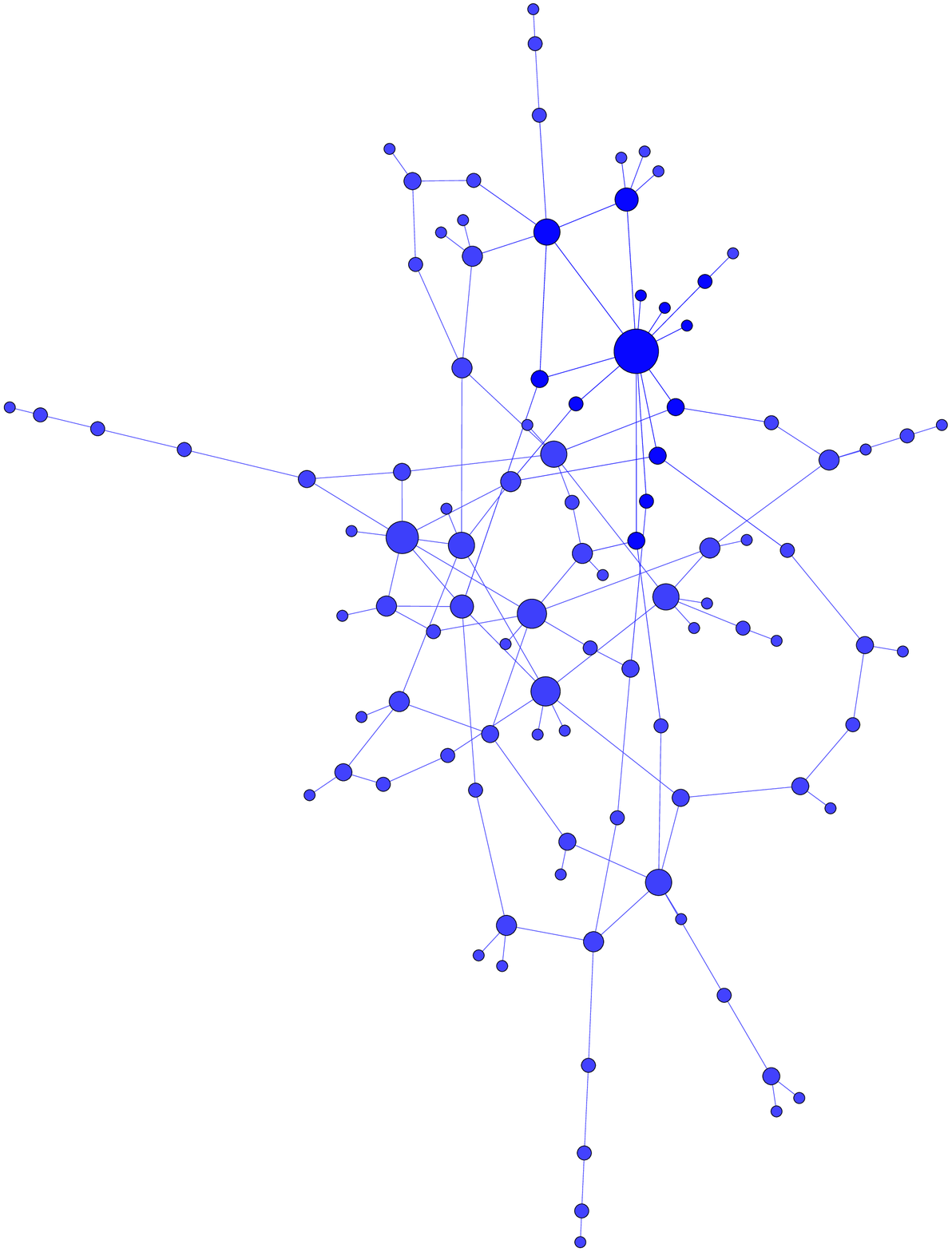}
  	\includegraphics[width=0.48\columnwidth]{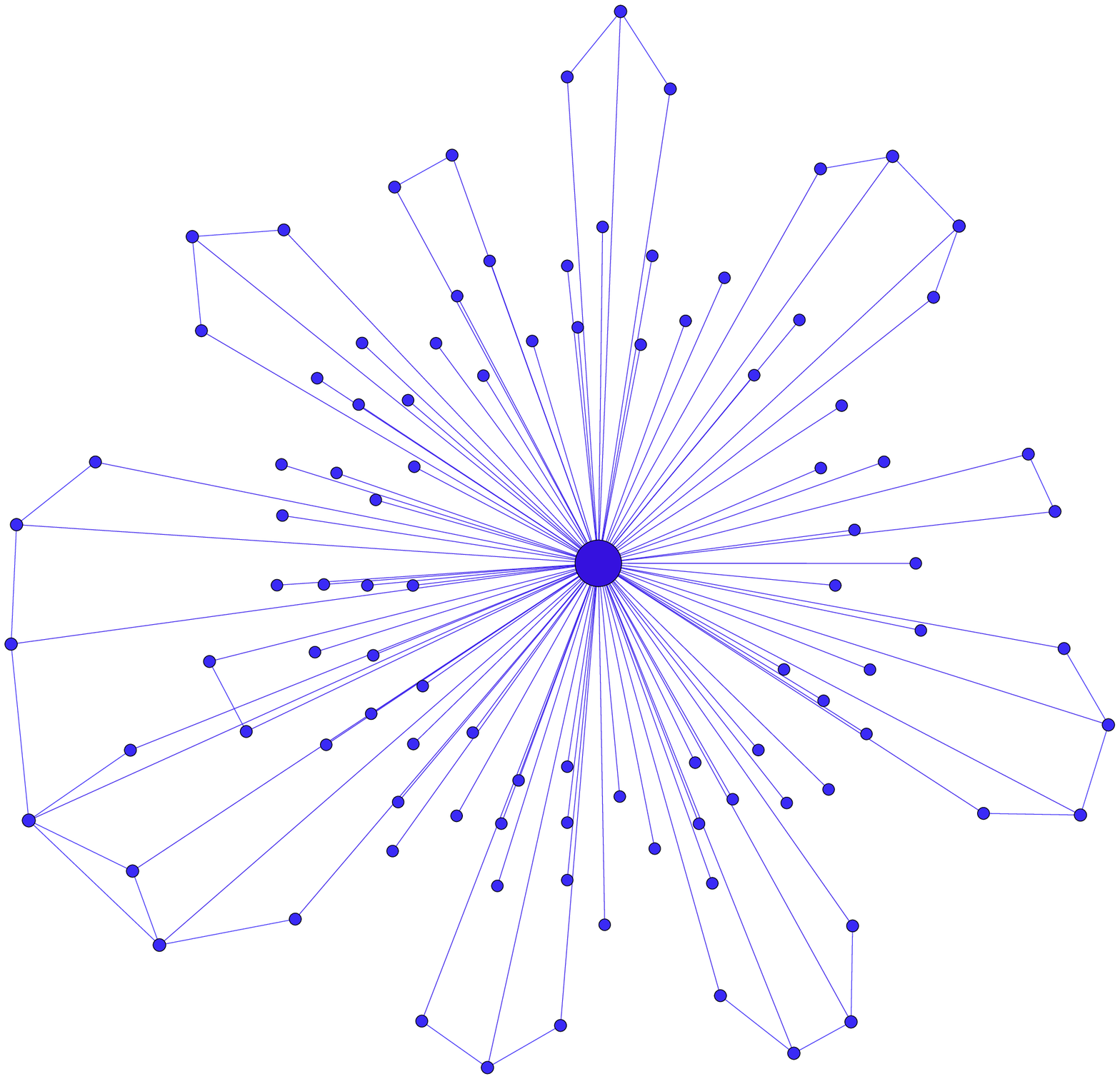}	
\caption{(Color online) Snapshot of networks of the social climbing game for $N = 100$, $M = 125$ for $\beta = 0.03$ (left panel) and $\beta = 0.1$ (right panel). Size of the nodes is proportional to the degree.}
\label{fig:nets}
\end{figure}

\begin{figure}
	\centering
  	\includegraphics[width=0.48\columnwidth]{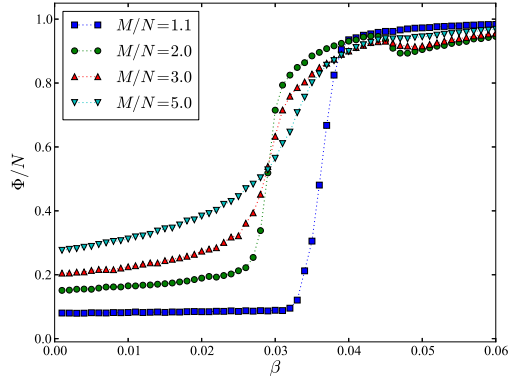}
\caption{(Color online) Dependence of the largest degree $\Phi$ (divided by $N$) in the social climbing network as a function of the inverse temperature (or intensity of choice parameter) $\beta$. The different curves refer to $N=100$ and $M=110$, $200$, $300$, $500$. For each value of $\beta$ the reported values of $\Phi$ are obtained by averaging over 100 networks. An abrupt change in is observed in all curves after a threshold value of $\beta$, with $\Phi / N$ going from low values to values close to one, signaling the emergence of a star, i.e. a link with $\mathcal{O}(N)$ links, in the network.}
\label{fig:max_deg}
\end{figure}

In Fig.\ \ref{fig:max_deg} we show the largest degree of the network $\Phi = \max_i k_i$ as a function of the inverse temperature $\beta$ for systems with $N = 100$ nodes and $M = 110$, $200$, $300$, $500$ links. As can be seen, in all cases the system actually undergoes a transition, going from a phase where the largest degree $\Phi$ is roughly of order $1-10$ (depending on the relative size of $N$ and $M$) to a phase where the largest degree is of order $N$. These observations qualitatively match the findings of \cite{Derenyi2004583}, where a prediction for the critical temperature $T_c = 1/\beta_c$ characterizing this phase transition was also derived, from combinatorial arguments, for networks with average degree $\bar{k} = 2M/N <2$, i.e. for disconnected graphs. The nature of the phase transition depicted in Fig.\ \ref{fig:max_deg} is further investigated in Fig.\ \ref{fig:transition}. In the left panel we show the full distribution of $\Phi / N$ with respect to $\bar{k}$ obtained by binning the results relative to 100 networks, for $\beta=0.01$ and $N=500$. For low (high) values of $\bar{k}$ the distribution is sharply concentrated around zero (one) and a steep transition occurs at a critical value of $\bar{k}$, meaning that the average is representative of the distribution of $\Phi / N$. Completely analogous results are found for different values of $\beta$. Therefore, in order to characterize the transition more precisely, in the right panel of Fig.\ \ref{fig:transition} we show the relation between the average of $\Phi / N$ over 100 networks with respect to both $\bar{k}$ and $\beta$.
\begin{figure}
	\centering
  	\includegraphics[width=0.48\columnwidth]{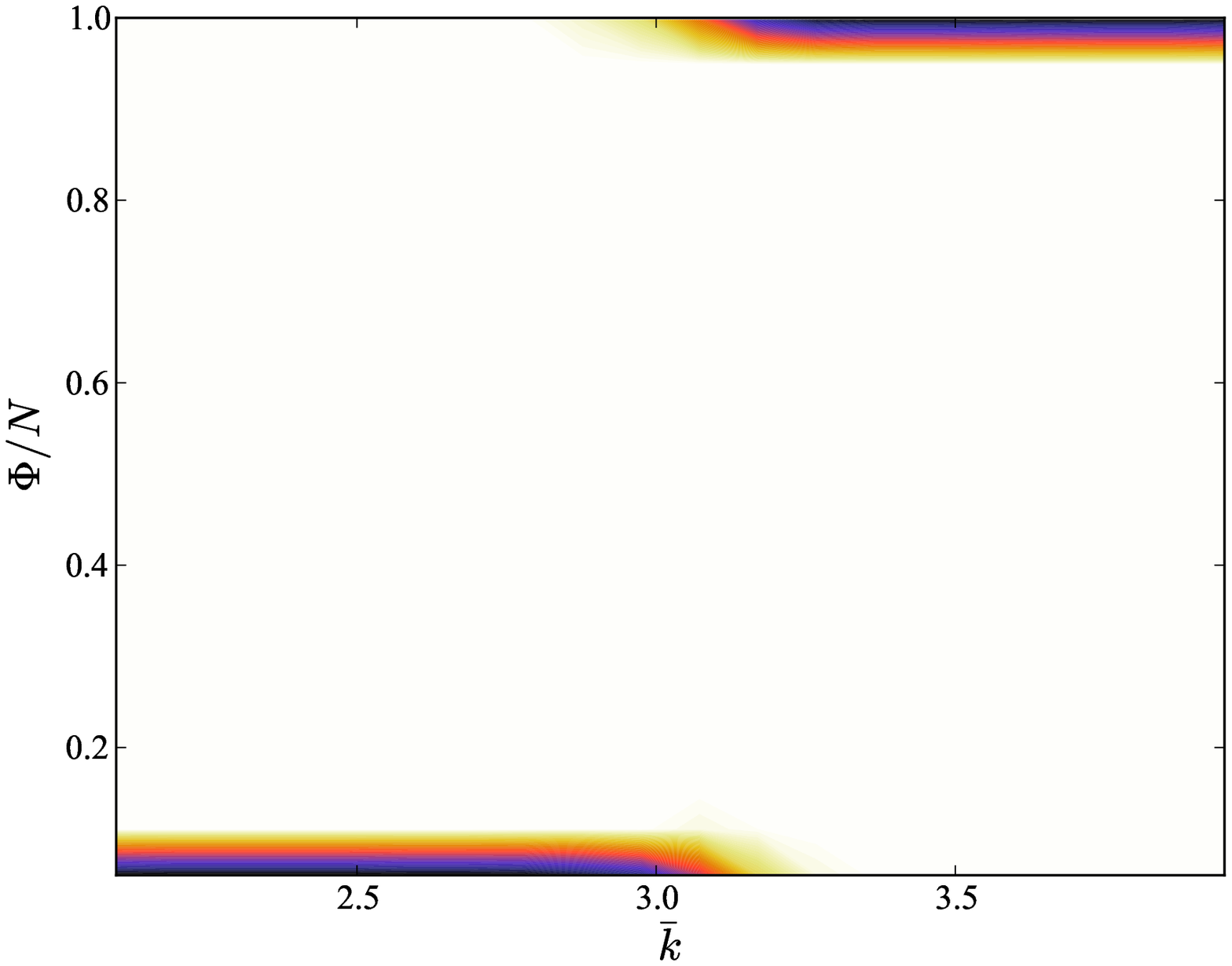}
  	\includegraphics[width=0.51\columnwidth]{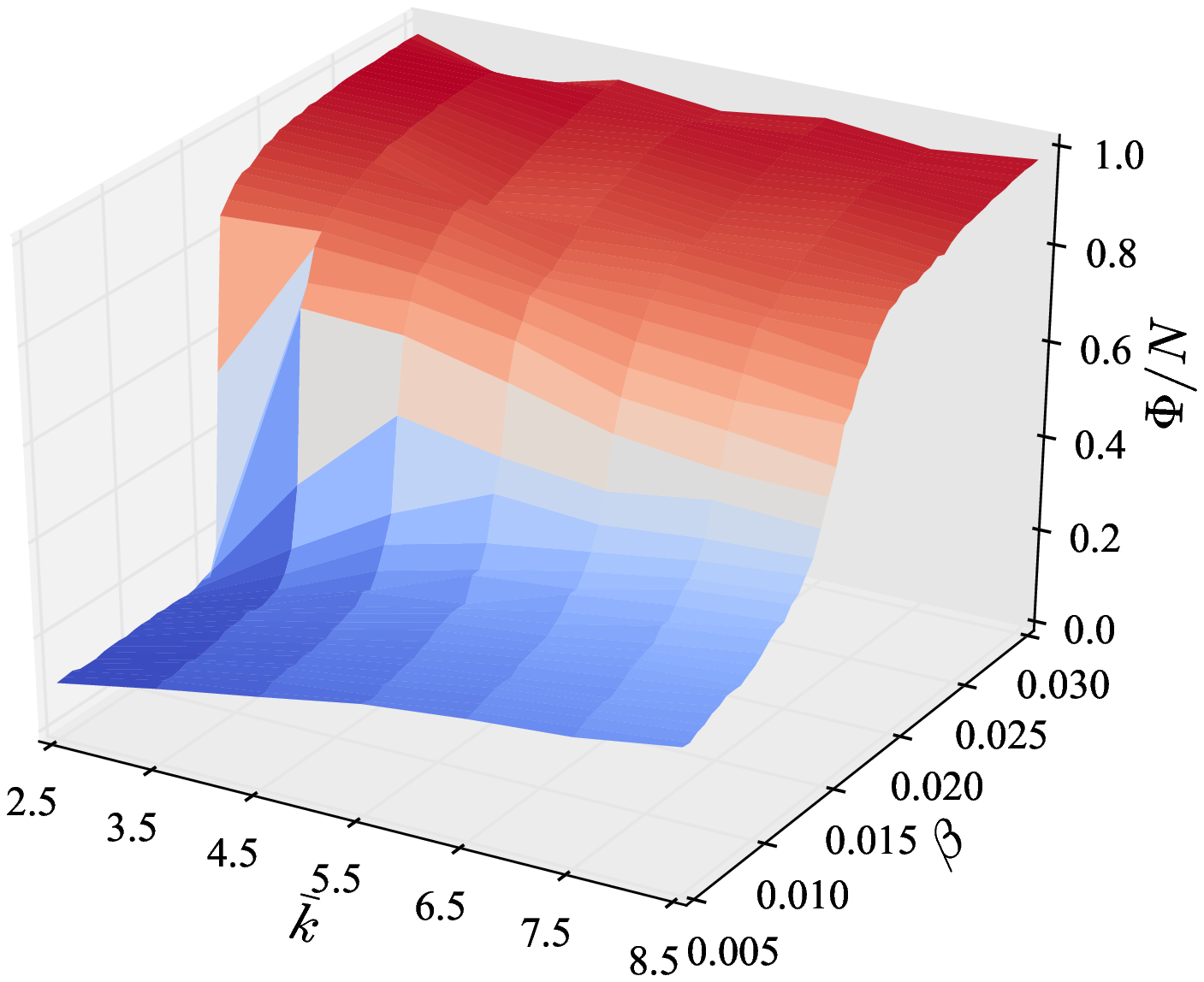}
\caption{(Color online) Left panel: density plot depicting the full distribution of the maximum degree (divided by $N$) as a function of the average degree obtained by binning the results relative to 100 networks, at inverse temperature $\beta = 0.01$ and number of nodes $N=500$. High (low) values are darker (lighter). Right panel: average maximum degree (divided by $N$) as a function of the average degree and $\beta$ for $N=100$; results obtained by averaging over 100 networks. Clearly, for low values of $\bar{k}$ the network is in the disordered phase, while for high values of $\beta$ it is in the ordered phase.}
\label{fig:transition}
\end{figure}

In Fig.\ \ref{fig:scaling} we analyze the dependence of the critical value of $\beta$ with respect to the size $N$ of the network, while keeping the average degree fixed, for $\bar{k} = 2.5, \, 5.0$. Qualitatively it is clear that, increasing $N$, both the transition becomes sharper and the critical value of $\beta$ shifts to the left. In order to understand if the critical value in thermodynamic limit $\beta_c$ is nonzero, we analyze the finite-size scaling behavior, assuming $\beta^*(N) = \beta_c + a N^{-b}$, where $\beta^*(N)$ is the critical value at size $N$ and $\beta_c$, $a$ and $b$ are free parameters. Since $b$ is expected to be universal (i.e.\ not dependent on the other parameters, like $\bar{k}$), it is reasonable to choose it by plotting $\beta^*(N)$ against $N^{-b}$ until straight lines are obtained. Both $a$ and $\beta_c$ are then found by a best fit. The value of $\beta^*(N)$ is obtained by a linear interpolation of the curves in Fig.\ \ref{fig:scaling} and calculating the value of $\beta$ such that $\Phi / N = 1/2$. From Fig.\ \ref{fig:scaling_joined} it can be clearly seen that for $b = 1.25$ the assumed functional form is fully consistent with numerical simulations up the investigated system size. The values we find for $\beta_c$ are soundly different from zero within $95\%$ confidence intervals provided by best fit. In particular we find for $\bar{k} = 2.5$: $\beta_c = (1.1 \pm 0.2) \cdot 10^{-2}$, while for $\bar{k} = 5.0$: $\beta_c = (2.6 \pm 0.2) \cdot 10^{-3}$.\footnote{Inspired by \cite{Derenyi2004583} we also performed  finite-size scaling analysis according to the functional form: $\beta^*(N) = \beta_c + a (M/\log(N))^{-b}$, which also gives values of $\beta_c$ soundly above zero and consistent with the ones discussed in the main text.} 
\begin{figure}
	\centering
  	\includegraphics[width=0.48\columnwidth]{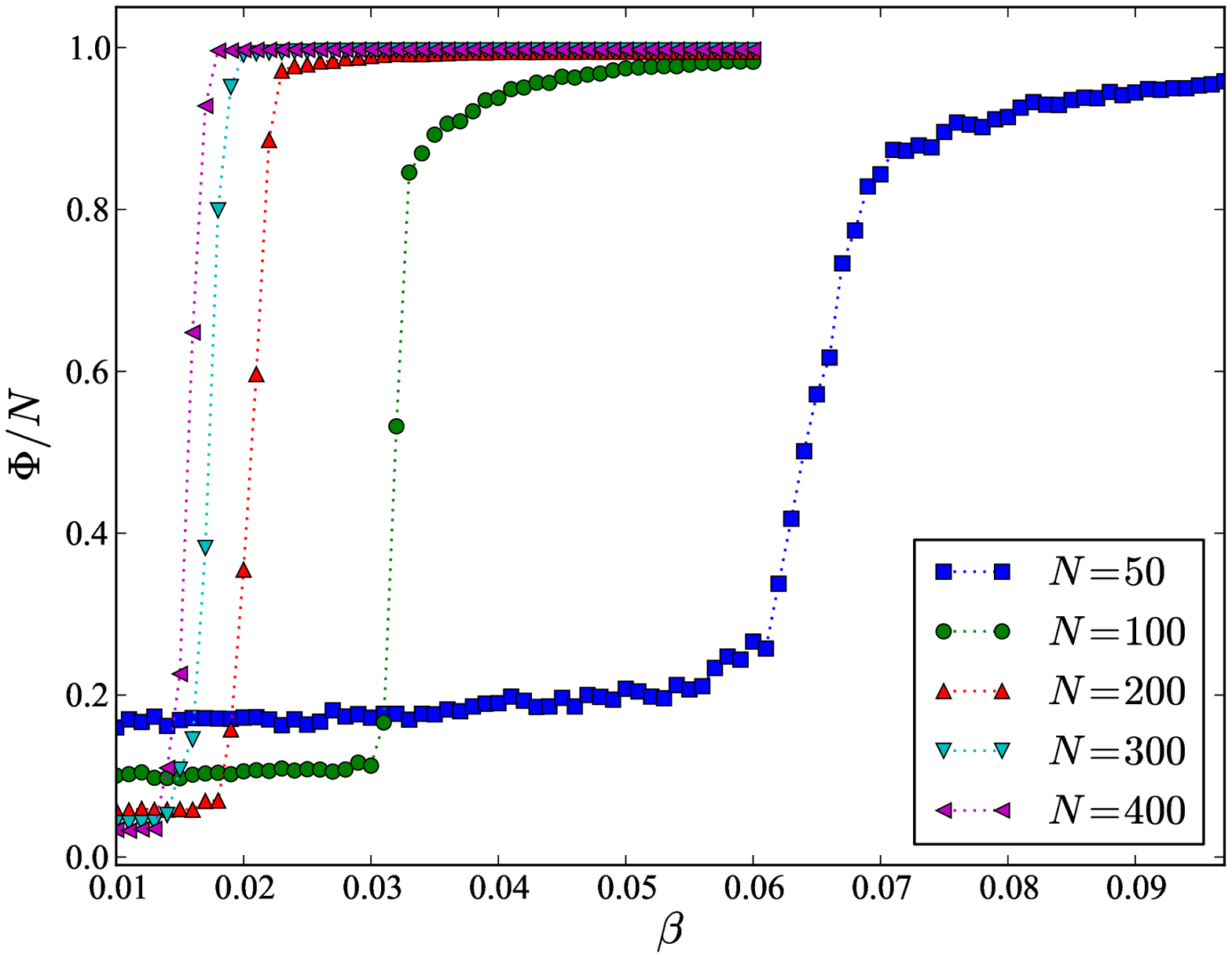}
  	\includegraphics[width=0.48\columnwidth]{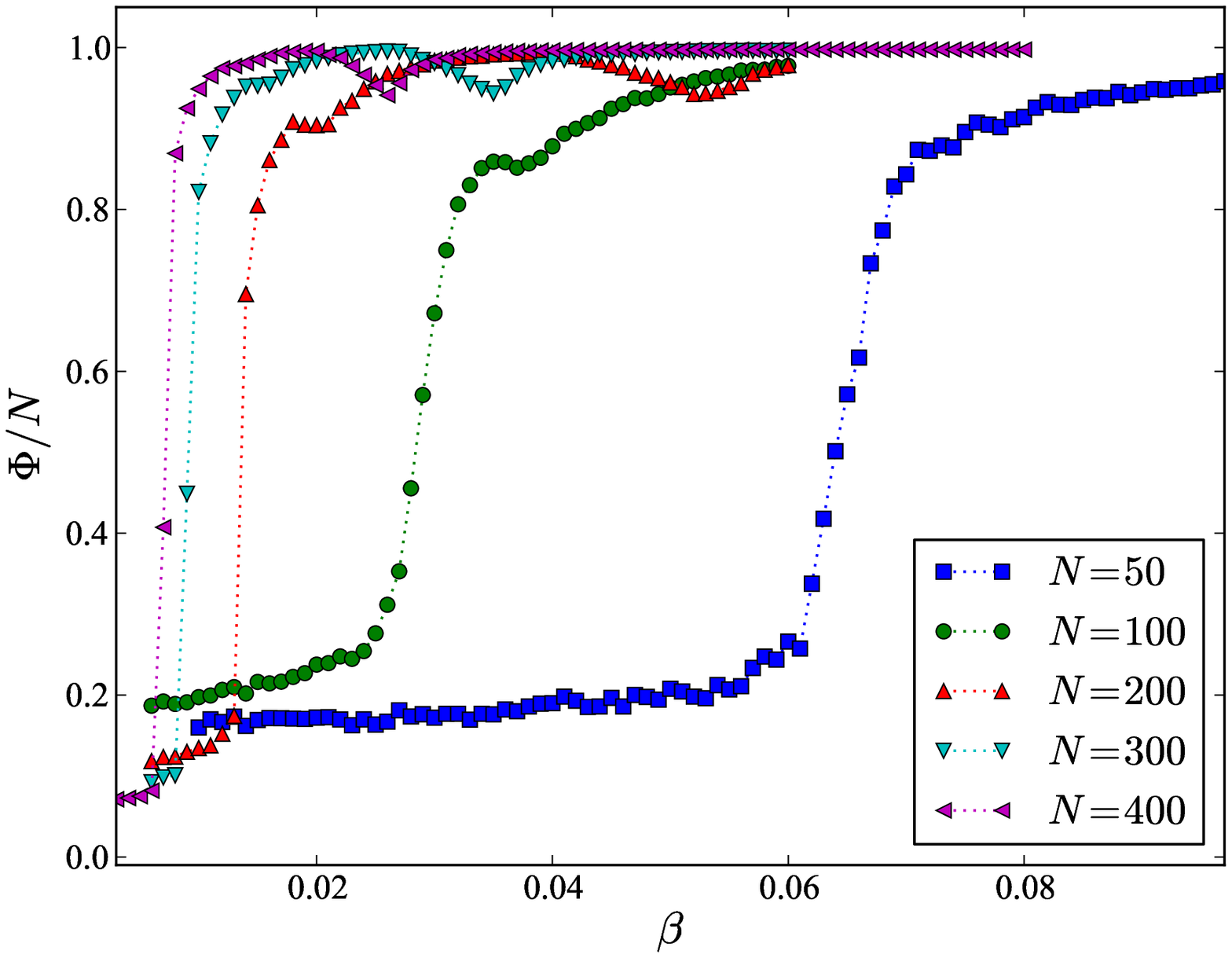}
\caption{(Color online) Left panel: Dependence between the maximum degree (divided by $N$) and $\beta$, for $\bar{k} = 2.5$ and different values of $N$. Results are averaged over 100 networks. The transition become sharper and the critical value of $\beta$ shifts to the left for increasing values of $N$. Right panel: as in the left panel for $\bar{k} = 5.0$.}
\label{fig:scaling}
\end{figure}

\begin{figure}
	\centering
  	\includegraphics[width=0.48\columnwidth]{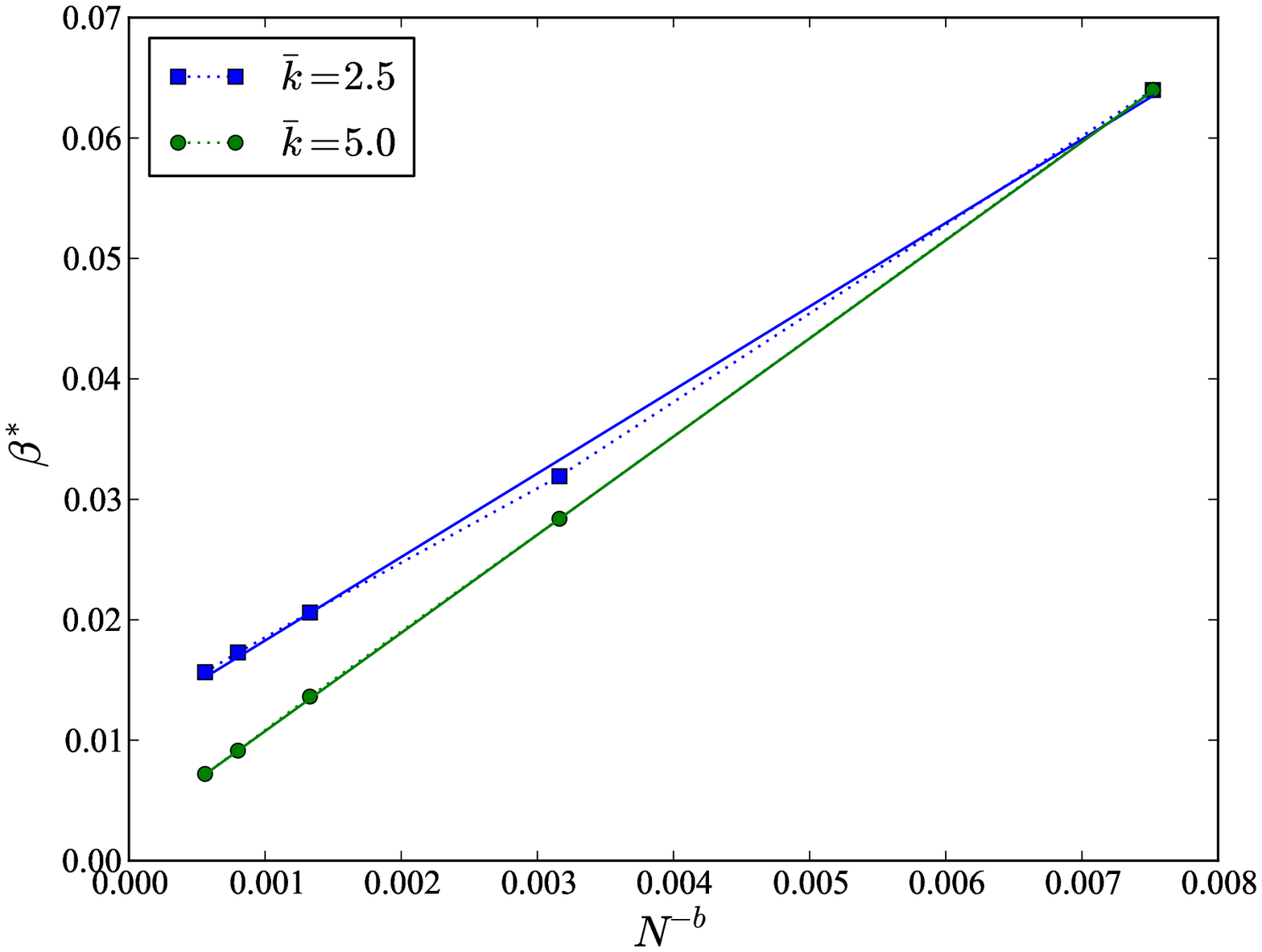}
\caption{(Color online) Finite-size scaling for determining the critical value in the thermodynamic limit $\beta_c$, assuming the functional form $\beta^*(N) = \beta_c + a N^{-1.25}$, where $\beta^*(N)$ is the critical value at size $N$. By best fitting we find: for $\bar{k} = 2.5$: $\beta_c = (1.1 \pm 0.2) \cdot 10^{-2}$, while for $\bar{k} = 5.0$: $\beta_c = (2.6 \pm 0.2) \cdot 10^{-3}$.}
\label{fig:scaling_joined}
\end{figure}

Following \cite{Derenyi2004583}, let us define a star as a node whose degree is of order $N$. Then, it is immediate to figure out that, depending on the ratio $M/N$, different number of stars might emerge in the network for temperatures lower than $T_c$. Clearly, in the case $N=100$, $M = 110$ (i.e.\ $\bar{k}$ only slightly larger than 2), the appearance of a star ($\Phi \simeq 100$ in this case) below the critical temperature leaves very few links to be distributed amongst the remaining nodes. On the other hand, increasing the number of links provides enough room for the emergence of a larger number of stars. In other words, it is intuitively reasonable to expect a system with an average degree $\bar{k} \simeq 2n$ to produce, for sufficiently low temperatures, exactly $n$ stars. In order to support such an intuitive line of reasoning, we computed the inverse participation ratios (IPRs) of the degree sequences $\mathbf{v} = (k_1, k_2, \ldots, k_N) / \sqrt{\sum_{i=1}^N k_i^2}$ of several networks with different numbers of nodes and links. Given a normalized vector $\mathbf{v}$, its IPR is defined as
\begin{equation} \label{eq:IPR}
I(\mathbf{v}) = \left ( \sum_{i=1}^N v_i^4 \right )^{-1} \, .
\end{equation}
The IPR of a completely localized vector, say $\mathbf{v} = (1,0,\ldots,0)$, is equal to one. On the other hand, the IPR of a fully delocalized vector, whose components are all equal to $v_i \simeq 1/\sqrt{N}$, is of order $N$. In our case $I(\mathbf{v})$ gives an estimate of the number of dominant nodes in the network. In Fig.\ \ref{fig:ipr} we plot the IPRs $I(\mathbf{v})$, as functions of $\beta$, for $N = 100$ nodes and $M = 110$, $200$, $300$. As can be seen, the IPR of the sparsest network, i.e.\ the one with $M=110$, essentially drops down to one right below its critical temperature. On the other hand, systems with a larger number of links undergo a less trivial evolution: after the initial drop below the critical temperature, the IPR increases and eventually reaches a steady state. In the example shown in Fig.\ \ref{fig:ipr}, the system with $M=200$ links reaches a steady value $I(\mathbf{v}) \simeq 2.12 \pm 0.03$, whereas the system with $M=300$ reaches $I(\mathbf{v}) \simeq 3.28 \pm 0.06$ (where the errors represent the $68\%$ confidence intervals obtained by averaging over 100 networks), and such values clearly show that the maximal number of stars allowed by the relative sizes of $N$ and $M$ has been achieved. Moreover, these observations are consistent with the small temporary decrease of the largest degree $\Phi$ which can be observed in Fig.\ \ref{fig:max_deg} for systems with $\bar{k} > 2$ when the inverse temperature is slightly larger than its critical value.

\begin{figure}
	\centering
  	\includegraphics[width=0.48\columnwidth]{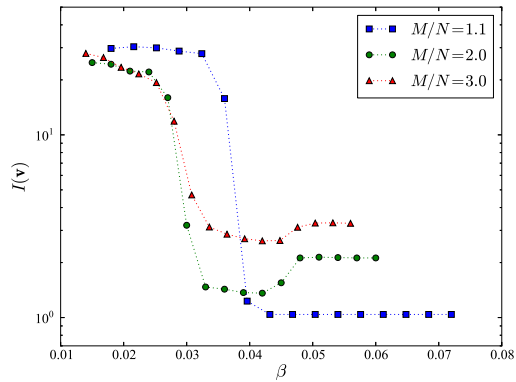}
\caption{(Color online) Inverse participation ratio of the normalized degree sequence $I(\mathrm{\vec{v}})$ as a function of the inverse temperature $\beta$. Different curves refer to networks with $N=100$ and $M=110$, $200$, $300$. Results obtained by averaging over 100 networks. For large enough $\beta$ one recovers the maximal number of stars.}
\label{fig:ipr}
\end{figure}

\subsection{Correlations and social mobility}

As already explained in Sect.\ \ref{sec:intro}, one of the goals of the present paper is to model the positive feedback mechanism between the individuals' effort to climb the social hierarchy and the subsequent reinforcement of the social hierarchy itself. Suppose that a given social network reaches its equilibrium state, at a certain inverse temperature $\beta$, after $t_0$ steps of the social climbing dynamics described in Sect.\ \ref{sec:model}. Let us denote the corresponding graph's adjacency matrix as $\hat{a}(t_0)$. Then, one way of quantitatively describing how mobile or ``frozen'' a society is would be to assess the level of correlation, according to some proper notion, between $\hat{a}(t_0)$ and a following configurations $\hat{a}(t)$, where $t = t_0 + \Delta t$ for some positive $\Delta t$. We will now measure correlations by means of Kendall's rank correlation coefficient. Given the joint set of all matrix entries in $\hat{a}(t_0)$ and $\hat{a}(t)$, let us focus, for example, on entries $(i,j)$ and ($h,\ell$) in both matrices. Then, if both $a_{ij}(t_0) > a_{h\ell}(t_0)$ and $a_{ij}(t) > a_{h\ell}(t)$, or if both $a_{ij}(t_0) < a_{h\ell}(t_0)$ and $a_{ij}(t) < a_{h\ell}(t)$, the pairs $(a_{ij}(t_0),a_{h\ell}(t_0))$ and $(a_{ij}(t),a_{h\ell}(t))$ are said to be concordant. On the contrary, if $a_{ij}(t_0) \gtrless  a_{h\ell}(t_0)$ and $a_{ij}(t) \lessgtr  a_{h\ell}(t)$ the pairs $(a_{ij}(t_0),a_{h\ell}(t_0))$ and $(a_{ij}(t),a_{h\ell}(t))$ are said to be discordant. Of course, since the adjacency matrix entries equal zero or one at each time, ties will often happen either at time $t_0$ or at time $t$ (or at both times). Kendall's correlation coefficient $\tau$ reads

\begin{equation} \label{eq:Kendall}
\tau(\Delta t) = \frac{C - D}{\sqrt{C+D+T_{t_0}}\sqrt{C+D+T_{t}}},
\end{equation}
where $C$ ($D$) is the numbers of concordant (discordant) pairs, whereas $T_{t_0}$ ($T_{t}$) denotes the number of time-$t_0$ (time-$t$) ties. Pairs where ties happen both at  $t_0$ and $t$ are not taken into account.
\begin{figure}
	\centering
  	\includegraphics[width=0.48\columnwidth]{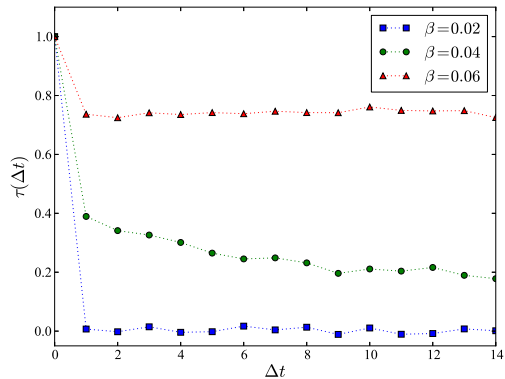}
\caption{(Color online) Kendall's $\tau$ coefficient (see \eqref{eq:Kendall}) measurements for networks with $N=100$ and $M=300$. All measurements are performed between an initial equilibrium configuration $\hat{a}(t_0)$ and later configurations $\hat{a}(t_0 + n\Delta t)$, with $t_0 = \Delta t = 5 \cdot 10^6$ Monte Carlo steps. The different curves refer to inverse temperatures $\beta = 0.02$, $0.04$, $0.06$, respectively corresponding to values below, slightly above and well above the critical value for the system under study (see also Fig.\ \ref{fig:max_deg}). Results obtained by averaging over 100 networks.}
\label{fig:Kendall}
\end{figure}

In Fig.\ \ref{fig:Kendall} a few examples of Kendall's $\tau$ coefficient's time evolution are sketched. All plots refer to networks with $N=100$ nodes and $M=300$ links. Here, $\Delta t = t_0 = 5 \cdot 10^6$ elementary Monte Carlo moves, i.e.\ rewiring proposals. As can be seen, when the social climbing game takes place for temperatures higher than the critical one, Kendall's $\tau$ quickly starts to fluctuate around zero, denoting no genuine correlation between configurations distant (in time). This we take as indication of a large social mobility. On the other hand, for temperatures slightly lower than the critical one, Kendall's $\tau$ remains significantly larger than zero over several time lags. However, a downward trend is clearly visible in this case, meaning that for temperatures $T \lesssim T_c$ social mobility is recovered after a sufficiently long time. On the contrary, for temperatures significantly lower than the critical one Kendall's $\tau$ essentially remains constant and very large (i.e.\ close to one) over large time lags, hinting at an extremely reduced social mobility, possibly preventing the majority of individuals from climbing the social ladder.

The above considerations on individuals' mobility in the social climbing game can be further clarified and understood more deeply. For these purposes, let us denote as $q_i$ the fraction of agents who, at a given time, have a strictly lower degree than agent $i$, i.e.\

\begin{equation} \label{eq:qindex}
q_i = \frac{1}{N} \sum_{j \neq i} \theta(k_i - k_j),
\end{equation}
where $\theta(x)=0$ for $x\le 0$ and $\theta(x)=1$ for $x>0$. The variable defined in \eqref{eq:qindex} clearly represents a suitable definition of the social ranking, hence the social status, of a given individual in the network. Thus, a reasonable measure of the individuals' mobility in the social climbing game is given by the change in the quantity defined above over a certain time lag $\Delta t$, i.e.\ $\Delta q_i (\Delta t) = q_i(t+\Delta t) - q_i(t)$, for $i=1,\ldots,N$. 
\begin{figure}
	\centering
  	\includegraphics[width=0.48\columnwidth]{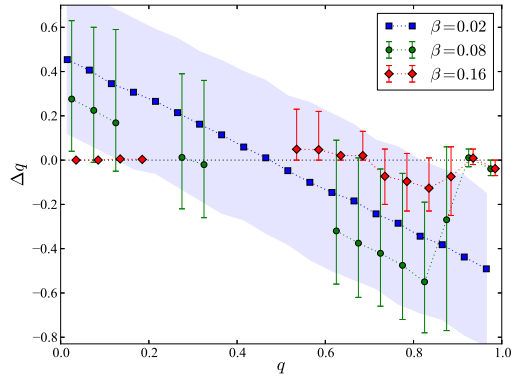}
\caption{(Color online) Relation between the $q$ index defined in \eqref{eq:qindex} and its variation $\Delta q$ over a given time lag $\Delta t = 10^4$ Monte Carlo steps for a network with $N=100$  and $M=1000$. The different curves refer to inverse temperatures $\beta = 0.02$, $0.08$, $0.16$, respectively corresponding to values below, slightly above and well above the critical value for the system under study. Points refer to the average variation $\Delta q$ over an equally spaced grid of $q$ values (going from 0 to 1 in steps of 0.05). Results obtained by averaging over 100 networks. Shaded area (for $\beta = 0.02$) and error bars (for $\beta = 0.08$, $0.16$) represent the central $68\%$ of the events. Points and error bars relative to different values of $\beta$ have been shifted to enhance readability.}
\label{fig:mobility}
\end{figure}

In Fig.\ \ref{fig:mobility} some typical behaviors of the $q$ index defined in \eqref{eq:qindex} are shown. All examples refer to networks with $N=100$ nodes and $M=1000$ links. In such plots, the average of the change $\Delta q$ is shown as a function of $q$, in order to provide information about the typical social mobility over a time lag $\Delta t$ for an agent whose social ranking at the beginning of such a time lag is quantified by $q$. As can be seen, depending on the preference for social status, i.e.\ on the inverse temperature parameter $\beta$, very different situations can happen. In a rather disordered society (low values of $\beta$) the relation between $q$ and $\Delta q$ is clearly linear, and does not depend strongly on the time lag size $\Delta t$. In particular, it can be seen that, on average, individuals sitting at the bottom of the ranking typically end up higher in the social ladder after some time, whereas individuals sitting atop the hierarchy are prevented from keeping their social status intact for a long time. When the preference for social status crosses its critical value, such a picture starts changing quite dramatically. For values of $\beta$ slightly larger than the critical value $\beta_c$ agents with low degrees still have a chance to climb up the social ladder, especially over rather long time lags, whereas the dominant individuals ($q \gtrsim 0.9$) typically get to keep their social ranking. It is worth to remark that for low $\beta$ the distribution of $k_i$ is not very skewed, so changes in social ranking $\Delta q_i$ are more frequent. In this sense, our notion of social mobility captures aspects related to social dynamics but it also depends on the stationary distribution of $q_i$'s, i.e. on the degree of inequality. As $\beta$ increases, the degree distribution acquires skewness, with few individuals having many links and reduced social mobility. For $\beta>\beta_c$ the population separates into two groups, those with $k_i$ of order $N$ and those with very few links, with suppressed mobility across the whole social hierarchy.
Also, as can be seen from the right plot in Fig.\ \ref{fig:mobility}, when the critical threshold is crossed the social network becomes ``fragmented'', as the $q$ index is no longer defined over the whole interval $[0,1]$. In a strongly ordered society, i.e.\ $\beta$ well above its critical value, agents with low degrees are almost completely stuck, and all of the social mobility happens in the top half of the social network, i.e.\ amongst agents with $q > 0.5$, and this is precisely due to the freezing of the dominant individuals inducing social mobility to disappear completely also amongst nodes with small degrees. These results complement, at a ``microscopic'' level, those presented in Fig.\ \ref{fig:Kendall}.

\section{Conclusions} \label{sec:conclusions}

In summary, we have discussed a very simple model for the dynamics of a social network where the agents' quest for high status in the social hierarchy reinforces the latter while reducing social mobility. The model is very stylized and far from a realistic description of social dynamics. Yet, it captures some key ingredients that are enough to reproduce stylized facts known at least since the work of Vilfredo Pareto \cite{Pareto}. Namely, Pareto observed that societies tend to organize in a hierarchical manner, with the emergence of ``social elites''\footnote{A similar concept of ``power elites'' has been discussed in \cite{Mills}.}. Our model, as well as Refs. \cite{koenig09,koenig11}, provides a formal framework showing that individual incentives for high social status are enough to confer this property to the social network, even in the absence of explicit discrimination of particular groups (e.g. cast system or racial segregation) or preferential biases (e.g. hereditary rules). In addition, we find that the hierarchical state is remarkably stable, with suppressed social mobility in the upper and lower parts of the hierarchy. Notably, Pareto himself observed that social mobility is higher in the middle classes \cite{Pareto}. Furthermore, our model exhibits a negative dependence between mobility and inequality, in the sense that more hierarchically structured (i.e. unequal) societies manifest a lower degree of mobility. It is tempting to relate this to the pervasive empirical observation that more unequal societies tend to have lower inter-generational mobility \cite{WilkinsonPickett2010,BjorklundJantti1997,AndrewsLeigh2009}. Our model neglects important dimensions, such as wealth or political power that, however, likely contribute to reinforce our results.

Secondly, we show that the social climbing game admits a potential function, thereby allowing us to deploy techniques and concepts of statistical mechanics to understand the behavior of the system. Statistical mechanics provides a natural language for discussing collective properties of societies. 
For example, the emergence of a social hierarchy in a system of \emph{a priori} identical individuals is an example of spontaneous symmetry breaking, whereby the associated loss of ergodicity accounts for the reduced social mobility. 

The present paper was mostly focused on investigating the model via numerical simulations. The mean field approach discussed in Ref. \cite{Park2004a,Park2004b} is not applicable in our case, because the density of links that plays the role of an order parameter in Ref. \cite{Park2004a,Park2004b} is fixed in our case. Indeed, the phenomenology we find is different from that of Ref. \cite{Park2004a,Park2004b} as we do not find evidence of hysteresis phenomena: there is no range of parameters where the disordered and the ordered societies are both stable. We speculate that this might be related to the fact that in the social climbing game there are mechanisms by which a social hierarchy can ``nucleate'' gradually in an ordered society, by forming social elites that grow over time. 

This and other issues can in principle be addressed within more sophisticated statistical mechanics approaches. In this respect, it is worth to mention that it is possible to map the problem into that of an interacting lattice gas that possibly admits for a full and exact statistical mechanics treatment. Work in this direction is currently in progress.

\begin{acknowledgements}
We thank Sanjeev Goyal and Giacomo Gori for precious hints and fruitful discussions. C.~J.~T. acknowledges financial support from Swiss National Science Foundation through grant  100014\_126865 and SBF (Swiss Confederation) through research project C09.0055. M.~B. heartily thanks M.~V.~Carlucci for her dedicated support.
\end{acknowledgements}

\appendix
\section{Ergodicity of the dynamics} \label{sec:appendix}

Let $\Gamma^{C}(N,M)$ be the space of connected graphs with $N$ vertices and $M$ edges.
In order to prove ergodicity, we have to show that, with a finite number of basic moves, we can reach any connected graph in $\Gamma^{C}(N,M)$, starting from another arbitrary graph in the same set. Before delving into the technical details, we give a simple intuitive sketch of this proof.

For a finite value of $\beta$, the dynamics consists of \emph{reversible} moves as the one depicted in fig.~\ref{fig:move}; such moves  can be thought of as a ``sliding'' of the edge  $e_{ik} $  on the path of length one $(k,j)$ from vertex $k$ to vertex $j$. The key observation to prove the ergodicity by induction is that, since the graph is finite and connected, there always exists a path of minimum length that connects two arbitrarily chosen vertices in the graph. Then, we can proceed in three steps.

\begin{enumerate}
 \item  Let there be two graphs in $\Gamma^{C}(N,M)$  which differ from each other only by an edge incident on the same vertex, $v_k$.
We first prove that by means of basic moves, we can transform one into the other. To do so, it suffices to slide the edge along the path that connects the other end of the edge, which we know to exist because the graph is connected (Prop.~1, Prop.~2).
 \item Let there be two graphs in $\Gamma^{C}(N,M)$ that differ by an edge with arbitrary ends. By applying the previous step twice, we show that there exists a finite set of moves that allows us to reach
 one configuration starting from the other (Prop.~3).
 \item Finally, let there be two arbitrary graphs in $\Gamma^{C}(N,M)$.
 Moving one edge at time, we show by induction that it is possible to reach one graph starting from the other with a finite number of moves. Thus, the ergodicity is proved (Prop.~4).
\end{enumerate}

We now proceed with the detailed Proof.

\begin{figure}
	\centering
  	\includegraphics[width=0.65\columnwidth]{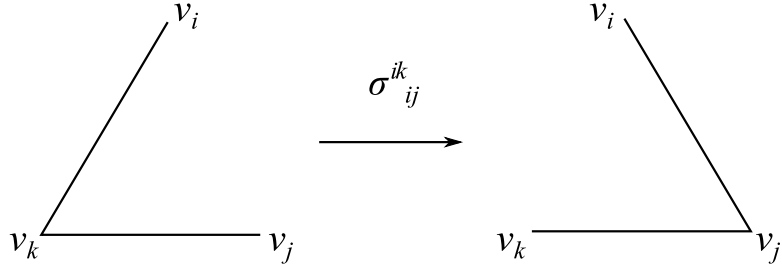}
\caption{The rewiring move ($c$-swap) of the dynamics.}
\label{fig:move}
\end{figure}

\begin{definition}[$c$-swap]
Let us choose a labeling for the space of vertices $V=\{ v_1,\ldots,v_N \}$ and an induced labeling for the edges $E=\{e_{ij}\}$ where $e_{ij}=e_{ji}=(v_i,v_j)$ denotes the undirected edge between $v_i$ and
$v_j$. Let us define a  transformation $\sigma_{ij}^{ik}: \Gamma^C(N,M) \mapsto \Gamma^C(N,M) $, called \emph{corner swaps ($c$-swaps)}, as following
\begin{equation}
 \sigma_{ij}^{ik}(\Gr)=\Gr'=(V,E')
\end{equation}
such that
 \begin{equation}
 E'=\begin{cases}
	\left ( E \setminus \{ e_{ik} \} \right ) \cup \{ e_{ij} \}  & \text{ if } (e_{kj}, e_{ik}\in E)\wedge( e_{ij} \notin E)  \\
     E & \text{ otherwise }
     \end{cases}.
 \end{equation}
\end{definition}

\begin{proposition}\label{pr:oneedge_0}
Let $\Gr=(V,E)$ and $\Gr'=(V,E')$ be two graphs in $\Gamma^C(N,M)$  that differ by an edge incident on the same vertex, i.e.  $|E|=|E'|=M$,  $|E'\cap E|=M-1$, $E\setminus E'=\{ e_{ik} \}$ and $E'\setminus E=\{ e_{ij} \}$, and such that the shortest  path $P$ from $v_k$ to $v_j$ does not contain neither $v_i$ nor any of its neighbors.

There exists an integer $l$ and a finite sequence of graphs in $\Gamma^C(N,M)$, $\Gr^n$  such that:
\begin{enumerate}[(i)]
\item $\Gr=\Gr^0$ and $\Gr'=\Gr^l$.
\item For all $0\leq n < l$ there exist adjacent vertices $v_{k_n}$,$v_{k_n+1}$ such that $\Gr^{n+1}=\sigma_{i k_{n+1}}^{i k_n } (\Gr^n)$, where $k_0=k$ and $k_l=j$.
\end{enumerate}
\end{proposition}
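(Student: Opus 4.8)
Proof proposal.

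The plan is to slide the moving edge $e_{ik}$ toward $e_{ij}$ one vertex at a time along the shortest path $P$, checking at each step that the corresponding $c$-swap is admissible, i.e.\ that it falls into the first (nontrivial) case of the definition rather than the ``otherwise'' branch. Write $P=(v_{k_0},v_{k_1},\ldots,v_{k_l})$ with $v_{k_0}=v_k$ and $v_{k_l}=v_j$; since the sets $E\setminus E'$ and $E'\setminus E$ are disjoint, $e_{ik}\neq e_{ij}$, so $k\neq j$ and $l\geq 1$. As a shortest path $P$ is simple, so its vertices are pairwise distinct. Two consequences of the hypotheses will be used repeatedly: \textbf{(1)} no vertex of $P$ equals $v_i$ (this is the assumption for the interior and terminal vertices, and $v_{k_0}=v_k\neq v_i$ in any case since $e_{ik}$ is an edge), hence no edge of $P$ is incident on $v_i$; \textbf{(2)} for every $n=1,\ldots,l$ we have $e_{ik_n}\notin E$ --- for $n<l$ because $v_{k_n}$ is not a neighbor of $v_i$ in $\Gr$ by assumption, and for $n=l$ because $e_{ik_l}=e_{ij}\in E'\setminus E$.

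Next I would set $\Gr^0=\Gr$ and $\Gr^{n+1}=\sigma_{ik_{n+1}}^{ik_n}(\Gr^n)$, and prove by induction on $n\in\{0,\ldots,l\}$ that $\Gr^n$ is a well-defined connected graph with edge set $E^n=(E\setminus\{e_{ik}\})\cup\{e_{ik_n}\}$ --- that is, $\Gr^n$ agrees with $\Gr$ except that the edge at $v_i$ formerly pointing to $v_k$ now points to $v_{k_n}$. The base case $n=0$ is immediate. For the inductive step one checks the three conditions for $\sigma_{ik_{n+1}}^{ik_n}$ to act nontrivially on $\Gr^n$: \textbf{(a)} $e_{k_n k_{n+1}}\in E^n$, which holds since $e_{k_n k_{n+1}}$ is an edge of $P$, hence lies in $E$, and by \textbf{(1)} is not incident on $v_i$, so it has never been touched; \textbf{(b)} $e_{ik_n}\in E^n$, which is the induction hypothesis; \textbf{(c)} $e_{ik_{n+1}}\notin E^n$, which holds because $e_{ik_{n+1}}\notin E$ by \textbf{(2)} and, the $v_{k_m}$ being distinct, $e_{ik_{n+1}}$ differs from the single added edge $e_{ik_n}$. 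Hence the swap acts nontrivially, giving $E^{n+1}=(E^n\setminus\{e_{ik_n}\})\cup\{e_{ik_{n+1}}\}=(E\setminus\{e_{ik}\})\cup\{e_{ik_{n+1}}\}$, and $\Gr^{n+1}$ is connected as the image of a connected graph under an admissible $c$-swap. Taking $n=l$ yields $E^l=(E\setminus\{e_{ik}\})\cup\{e_{ij}\}=E'$, so $\Gr^l=\Gr'$; together with $\Gr^0=\Gr$ and the $v_{k_n}$ being consecutive (hence adjacent) along $P$, this is exactly the sequence required in (i)--(ii), and $l$ is the desired integer.

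The step I expect to be most delicate is \textbf{(c)}: one must rule out that $v_i$ is \emph{already} adjacent to the next path vertex $v_{k_{n+1}}$, either in the original graph $\Gr$ or as a side effect of an earlier swap. The first possibility is exactly what the hypothesis that $P$ avoids the neighbors of $v_i$ excludes (via \textbf{(2)}); the second is excluded by simplicity of the shortest path, which ensures the single moving edge at $v_i$ never lands twice on the same vertex. Everything else --- the edges of $P$ surviving every swap, the absence of self-loops or multi-edges, and $l\geq 1$ --- is an immediate consequence of observation \textbf{(1)} together with $k\neq j$, so I would establish \textbf{(1)} and \textbf{(2)} at the outset and then appeal to them throughout the induction.
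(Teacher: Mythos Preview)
Your proof is correct and follows essentially the same route as the paper's: slide the edge $e_{ik}$ along the shortest path $P$ from $v_k$ to $v_j$, one $c$-swap per step, until it becomes $e_{ij}$. The paper's argument is terser --- it just asserts that each swap shortens the remaining distance by one and that no other edge is touched --- whereas you explicitly verify the three admissibility conditions (a)--(c) for each $\sigma_{ik_{n+1}}^{ik_n}$ and carry the edge-set formula $E^n=(E\setminus\{e_{ik}\})\cup\{e_{ik_n}\}$ through by induction; this is a genuine improvement in rigor, and your isolation of observations \textbf{(1)} and \textbf{(2)} (in particular that \textbf{(2)} is needed only for $n\geq 1$, since $v_{k_0}=v_k$ \emph{is} a neighbor of $v_i$) cleanly resolves the slight awkwardness in the hypothesis as stated.
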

\begin{proof}
Let $l$ be the length of $P$.

Let $v_{k_1}$ be the unique neighbor of $v_k$ that lies in $P$. If we set $\Gr^{1}=\sigma_{i k_1}^{i k} ( \Gr )$, the $c$-swap reduces the distance between $v_i$ and $v_j\,$, since the neighbor of $v_k$ that lies in $P$ must have a distance $l-1$ from $v_j$.
We reiterate the procedure on $\Gr^1$ and obtain in such a way a sequence of graphs that satisfies property (ii).
 Now, since at any step the length of $P$ diminishes by 1, after the $l$-th step, in the graph  $\Gr^{l}$  $v_i$ and $v_j$ will be neighbors.
 Thus, since no other edge was changed by applying $c$-swaps,  $\Gr^l=\Gr'$ proving property (i).
\end{proof}

\begin{proposition}\label{pr:oneedge_s}
Let $\Gr=(V,E)$ and $\Gr'=(V,E')$ be two graphs in $\Gamma^C(N,M)$  which differ by an edge incident on the same vertex, i.e.  $|E|=|E'|=M$, $|E'\cap E|=M-1$, $E\setminus E'=\{ e_{ik} \}$ and $E'\setminus E=\{ e_{ij} \}$.

There exists an integer $l$ and a finite sequence of graphs in $\Gamma^C(N,M)$, $\Gr^n$  such that:
\begin{enumerate}[(i)]
\item $\Gr=\Gr^0$ and $\Gr'=\Gr^l$
\item For all $0\leq n \leq l$ there exist adjacent vertices $v_{k_n}$,$v_{k_n+1}$ such that $\Gr^{n+1}=\sigma_{i k_{n+1}}^{i k_n} (\Gr^n)$.
\end{enumerate}
\end{proposition}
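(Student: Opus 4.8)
The plan is to bootstrap from Proposition~\ref{pr:oneedge_0}. If the shortest path $P$ from $v_k$ to $v_j$ already avoids $v_i$ and every neighbour of $v_i$ (other than the endpoint $v_k$ itself), there is nothing to do: Proposition~\ref{pr:oneedge_0} produces the required $c$-swap sequence verbatim. So the real content is the complementary case, where $P$ meets $v_i$ or some neighbour of $v_i$, and the naive sliding of $e_{ik}$ along $P$ gets stuck the first time it would have to ``land'' the moving edge on a vertex already adjacent to $v_i$ (landing on $v_i$ itself would be a loop). I would unblock this by first relocating the offending $v_i$-incident edge out of the way and then reducing to a strictly simpler instance of the very statement being proved.

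Concretely, write $P = w_0, w_1, \dots, w_l$ with $w_0 = v_k$, $w_l = v_j$, and let $t^\ast$ be the \emph{largest} index with $w_{t^\ast} \in \partial v_i$; since $e_{ij}\notin E$ we have $w_l\notin\partial v_i$, so $t^\ast < l$, and since we are not in the clean case of Proposition~\ref{pr:oneedge_0} we have $t^\ast \ge 1$, i.e.\ $w_{t^\ast}\neq v_k$. By maximality of $t^\ast$ the subpath $w_{t^\ast}, w_{t^\ast+1},\dots,w_l$ avoids every neighbour of $v_i$ except its initial vertex $w_{t^\ast}$, and one checks it avoids $v_i$ as well. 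Hence Proposition~\ref{pr:oneedge_0}, applied with $w_{t^\ast}$ and $v_j$ in the roles of $v_k$ and $v_j$ and the edge $e_{i w_{t^\ast}}$ in the role of $e_{ik}$, furnishes a finite $c$-swap sequence carrying $\Gr$ to a graph $\Gr_1$ in which $e_{i w_{t^\ast}}$ has been replaced by $e_{ij}$ and in which $e_{ik}$ is untouched. Now $\Gr_1$ and $\Gr'$ differ by a single $v_i$-incident edge --- namely $e_{ik}\in E(\Gr_1)$ versus $e_{i w_{t^\ast}}\in E(\Gr')$ --- so we are back to the present statement, but with a target vertex $w_{t^\ast}$ that is strictly closer to $v_k$ along $P$ and no longer adjacent to $v_i$. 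Iterating this move --- each time peeling off the last $\partial v_i$-vertex from the working shortest path and relocating its $v_i$-edge forward --- should terminate, whereupon Proposition~\ref{pr:oneedge_0} closes the argument; concatenating all the $c$-swap sequences yields the claimed finite sequence $\Gr^0,\dots,\Gr^l$.

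The delicate points, which I expect to carry the weight of the proof, are three. First, one must pin down the monotone quantity that forces termination --- the count of $\partial v_i$-vertices on the working shortest path, or a distance, looks like the right candidate --- and the argument has to absorb the configurations in which $v_i$ \emph{itself} lies on $P$ (this forces $w_{t^\ast-1}=v_i$ and, via $e_{ik}\in E$, $d(v_k,v_i)=1$, so $t^\ast$ is small), where a two-step detour must replace the direct slide. Second, every preliminary $c$-swap must be checked admissible, i.e.\ that the edge it creates is genuinely absent in the current graph; this is exactly what the careful choice of $t^\ast$ and of the clean subpath is designed to guarantee. Third, one must stay inside $\Gamma^C(N,M)$: $N$ and $M$ are preserved by construction, and connectivity is preserved because a $c$-swap $\sigma_{ij}^{ik}$ only ever trades $e_{ik}$ for $e_{ij}$ in the presence of the edge $e_{kj}$, so any path using $e_{ik}$ can be rerouted through $v_j$ --- this, together with the hypothesis that both $\Gr$ and $\Gr'$ are connected, is what rules out the ``deadlock'' graphs in which the blocked edge cannot be relocated at all, since connectedness of $\Gr'$ supplies the spare $v_i$-incident edges the detour needs.
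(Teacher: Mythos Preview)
Your approach is genuinely different from the paper's. The paper does a four-case analysis on a shortest $v_k$--$v_j$ path $P$ that avoids the edge $e_{ik}$, according to whether $P$ meets $v_i$ and/or one or two neighbours of $v_i$; in each case it finishes with at most two applications of Proposition~\ref{pr:oneedge_0}, and crucially it is willing to change the pivot vertex (sliding $e_{ki}$ with $v_k$ held fixed, then sliding $e_{kj}$ with $v_j$ held fixed). Your scheme instead keeps $v_i$ as the pivot throughout and \emph{iterates}: peel off the last $\partial v_i$-vertex $w_{t^\ast}$ on $P$, relocate its $v_i$-edge forward along the clean tail, shrink the working path, and repeat. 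When $v_i\notin P$ this works cleanly, with the index $t^\ast$ as a strictly decreasing termination witness.

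The gap is the case $v_i\in P$, which you flag but do not resolve. Your claim ``this forces $w_{t^\ast-1}=v_i$'' is not right in general: with $v_i=w_1$ one only gets $t^\ast\ge 2$, and the obstruction appears not at the first peel but at the moment the iteration reaches target $w_2$. At that step $e_{i w_2}$ is an edge of $P$ itself (it is $e_{w_1 w_2}$), so relocating it destroys the very subpath $w_0,\dots,w_2$ you intend to reuse; the residual problem ``move $e_{ik}$ to $e_{iw_2}$'' now lives in a graph where the obvious path is gone, and any replacement path may reintroduce $\partial v_i$-vertices (indeed $v_j$ has just entered $\partial v_i$), so your proposed monotone quantity no longer decreases. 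The ``two-step detour'' you allude to is exactly what needs to be supplied here, and it is not obvious how to do it while keeping $v_i$ as the sole pivot; the paper sidesteps this by pivoting around $v_k$ and then $v_j$ in its case~(ii), which is the idea your sketch is missing.
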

\begin{proof}
Let $P$ be the shortest path in $\Gr$ from $v_k$ to $v_j$ that does not contain $(v_k,v_i)$.

There are four possible cases :
\begin{enumerate} [(i)]
\item $P$ does not contain neither $v_i$ nor any of its neighbors other than $v_k$. The thesis is proven applying proposition~\ref{pr:oneedge_0} directly to $P$.

\item $P$ contains $v_i$.  Let $P_1$ be the shortest path from $v_k$ to $v_i$ that does not contain  the edge $(v_k,v_i)$. let $P_2$ be the shortest path from $v_i$ to $v_j$, clearly $P=P_1 \oplus P_2$, where $\oplus$ is the path concatenation. Since by construction there  are no neighbors of $v_k$ in $P_2$ (otherwise $P$ would not contain $v_i$)  we can apply proposition~\ref{pr:oneedge_0}  and reach $\Gr''=(V,(E\setminus \{e_{ki}\})\cup\{e_{kj}\})$; on the other hand there cannot be neighbors of $v_j$ in $P_1$ (otherwise there would be a shortest path not containing $v_i$) and thus applying again proposition~\ref{pr:oneedge_0} along $P_1$ we reach $\Gr'$ proving the thesis.
\item  $P$ does not contain $v_i$ but two of its neighbors, $c$ and $f$ such that $c \neq v_k$, $f\neq v_k$ and $|c,v_k| < |f,v_k|$, where $| \cdot , \cdot |$ represents the graph distance between two vertices.
We first note that $c$ and $f$ must be neighbors, otherwise $P$ should include $v_i$.
Then, as in case (ii), by minimality we can write $P=P_1\oplus (c,f)\oplus  P_2$ where $P_1$ is the shortest path from $v_k$ to $c$ and $P_2$ is the shortest path from $f$ to $v_j$. It is easy to see  that $Q_2=(v_i,f)\oplus P_2$ is a shortest path from $v_i$ to $v_j$: if it were not so, there would exist a path $Q_2'$ from $v_i$ to $v_j$ strictly shorter than $Q_2$, but in that case $P_1\oplus (c,v_i)\oplus Q_2'$ would be a shortest path from $v_k$ to $v_j$ containing $v_i$, in contradiction with our hypotheses. A similar argument holds for $Q_1$. As before, since, by minimality, there cannot be neighbors of $v_k$ in $P_2$, it is possible to reach the graph $\Gr''=(V,(E\setminus \{e_{ki}\})\cup\{e_{kj}\})$ by applying proposition~\ref{pr:oneedge_0} to $Q_2$; since by minimality there cannot be neighbors of $v_j$ in $P_1$, we can apply proposition~\ref{pr:oneedge_0} to $\Gr''$ along $Q_2$ and reach $\Gr'$ proving the thesis.

\item The shortest path $P$ contains only one neighbor of $v_i$ other than $v_k$, let us call it $m$. As before, $P=P_1\oplus P_2$ where $P_1$ is the shortest path from $v_k$ to $m$ and $P_2$ is the shortest path from $m$ to $v_j$. Since by construction there cannot be other neighbors of $i$ in $P_2$,  we can apply proposition~\ref{pr:oneedge_0} to $P_2$ and reach the graph $\Gr^*=(V,(E\setminus \{e_{im}\})\cup\{e_{ij}\}$. On the other hand, by construction there cannot be neighbors of $v_i$ in $P_1$ other than $v_k$ and thus we can apply proposition~\ref{pr:oneedge_0} to $P_2$ and reach $\Gr'$ proving the thesis.
\end{enumerate}

\end{proof}

\begin{proposition}\label{pr:oneedge}
Let $\Gr=(V,E)$ and $\Gr'=(V,E')$ be two graphs in $\Gamma^C(N,M)$ such that 
$|E|=|E'|=M$ and $|E \cap E'|=M-1$. Let us assume that, in particular, $E=\{ e_{ij} \} \cup (E \cap E')$ and $E'=\{ e_{hk} \} \cup (E \cap E')$.

Thus there exists an integer $l$ and a finite sequence of graphs in $\Gamma^C(N,M)$, $\Gr^n$  such that:
\begin{enumerate}[(i)]
\item $\Gr=\Gr^0$ and $\Gr'=\Gr^l$
\item For all $0\leq n < l$ there exist adjacent vertices $v_{k_n}$,$v_{k_n+1}$ such that $\Gr^{n+1}=\sigma_{i k_{n+1}}^{i k_n} (\Gr^n)$.
\end{enumerate}
\end{proposition}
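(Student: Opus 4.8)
The plan is to realize the edge rearrangement $e_{ij}\to e_{hk}$ as \emph{two} successive applications of Proposition~\ref{pr:oneedge_s}, each of which is itself a finite chain of $c$-swaps. Write $H=E\cap E'$, so $|H|=M-1$, $\Gr=(V,H\cup\{e_{ij}\})$ and $\Gr'=(V,H\cup\{e_{hk}\})$; since attaching the single edge $e_{ij}$ makes $(V,H)$ connected, $(V,H)$ has at most two connected components. It suffices to produce a connected graph $\Gr''\in\Gamma^C(N,M)$ such that $\Gr$ and $\Gr''$ differ by one edge incident on a common vertex, and likewise $\Gr''$ and $\Gr'$: Proposition~\ref{pr:oneedge_s} then links $\Gr$ to $\Gr''$ and $\Gr''$ to $\Gr'$ by $c$-swaps, and concatenating the two sequences proves the claim.

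If $e_{ij}$ and $e_{hk}$ share an endpoint, Proposition~\ref{pr:oneedge_s} applies verbatim, with no intermediate needed. Otherwise assume $\{i,j\}\cap\{h,k\}=\emptyset$. The natural choice is $\Gr''=(V,H\cup\{e\})$ with $e$ a \emph{bridging} edge, $e\in\{e_{ih},e_{ik},e_{jh},e_{jk}\}$, joining an endpoint of $e_{ij}$ to one of $e_{hk}$: then $\Gr\to\Gr''$ removes $e_{ij}$ and adds $e$ (common endpoint in $\{i,j\}$), while $\Gr''\to\Gr'$ removes $e$ and adds $e_{hk}$ (common endpoint in $\{h,k\}$). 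This is legitimate exactly when $e\notin H$ and $H\cup\{e\}$ is connected. If $(V,H)$ has two components, either of the two bridging edges crossing the cut meets both requirements automatically (a crossing edge cannot lie in $H$, and adding it reconnects $(V,H)$); if $(V,H)$ is connected, $H\cup\{e\}$ is connected for any $e$, so any bridging edge outside $H$ works.

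The only remaining case --- and the crux --- is $(V,H)$ connected with all four bridging edges already in $H$. Then $e_{ij},e_{jk},e_{ik}$ all lie in $E$, so $\Gr$ contains the triangle through $i$, $j$, $k$; hence $e_{ik}$ sits on a cycle, $\Gr\setminus\{e_{ik}\}$ remains connected, and so does $\Gr''=(V,(E\setminus\{e_{ik}\})\cup\{e_{hk}\})$ once $e_{hk}$ is attached. Now $\Gr\to\Gr''$ removes $e_{ik}$ and adds $e_{hk}$ (common endpoint $k$), and $\Gr''\to\Gr'$ removes $e_{ij}$ and adds $e_{ik}$ (common endpoint $i$); each pair consists of connected graphs in $\Gamma^C(N,M)$ sharing $M-1$ edges, so Proposition~\ref{pr:oneedge_s} again supplies the $c$-swap chains. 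I expect the main difficulty to be exactly this: recognizing that the obvious intermediate $H\cup\{e\}$ may fail to exist when the common subgraph is already saturated around $i,j,h,k$, and that one must then perturb $H$ itself, using the triangle through $i$, $j$, $k$ to keep every intermediate graph connected; the two-component bookkeeping for $(V,H)$ in the generic case is the secondary point that needs care.
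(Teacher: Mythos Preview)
Your proposal is correct and follows essentially the same two-step strategy as the paper: introduce an intermediate graph $\Gr''$ sharing an endpoint with each of $e_{ij}$ and $e_{hk}$, then apply Proposition~\ref{pr:oneedge_s} twice. The paper's proof simply takes $\Gr''=(V,(E\setminus\{e_{ij}\})\cup\{e_{ih}\})$ without further comment, whereas you additionally verify that such an intermediate lies in $\Gamma^C(N,M)$ and treat the exceptional case where every bridging edge already belongs to $H$; in that sense your argument is more careful than the paper's, but the underlying idea is the same.
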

\begin{proof}
Let us define the graph $\Gr''=(V,E'')$ such that $E''=(E \setminus \{ e_{ij} \})\cup \{e_{ih}\}$. Applying proposition~\ref{pr:oneedge_s} first to graphs $\Gr$ and $\Gr''$ and then to graph $\Gr''$ and $\Gr'$ proves the thesis.
\end{proof}

\begin{definition}[$g$-swap]
Let $\Gr=(V,E)$ and $\Gr'=(V,E')$ be two graphs in $\Gamma^C(N,M)$ which differ at most by an edge, that is such that
$|E|=|E'|=M$ and $|E \cap E'|=M-1$. Let us assume that, in particular, $E=\{ e_{ij} \} \cup (E \cap E')$ and $E'=\{ e_{hk} \} \cup (E \cap E')$.

We define a \emph{global swap or $g$-swap} of the edge $e_{ij}$ to the edge $e_{hk}$ a transformation such that:
\begin{equation}
\Gr'=\Sigma_{ij}^{hk}(\Gr)
\end{equation}
\end{definition}
Proposition~\ref{pr:oneedge} simply states that any global swap can be obtained as the composition of a minimal set of corner swaps between adjacent vertices.

\begin{proposition}\label{pr:ergodicity}
Let $\Gr=(V,E)$ and $\Gr=(V,E')$ be two graphs in $\Gamma^C(N,M)$.
There exists an integer $d$ and a sequence of graphs $\Gr^n(V, E_n)$ in $\Gamma^C(N,M)$ such that:
\begin{enumerate}[(i)]
\item $\Gr=\Gr^0$ and $\Gr'=\Gr^d$
\item For all $0\leq n < d $ there exist four vertices $v_i\,$, $v_j\,$,$v_h$ and $v_k$ such that $\Gr^{n+1}=\Sigma_{ij}^{hk}(\Gr^n)$
\end{enumerate}
 \end{proposition}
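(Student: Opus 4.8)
The plan is to induct on the number $s = M - |E \cap E'|$ of edges one has to replace to pass from $\mathcal{G}$ to $\mathcal{G}'$; note that $|E\setminus E'| = |E'\setminus E| = s$ because $|E| = |E'| = M$. The base case $s=0$ forces $E = E'$, hence $\mathcal{G} = \mathcal{G}'$, and we take $d = 0$. For the inductive step I want to exhibit a \emph{single} $g$-swap that deletes one edge of $E\setminus E'$, inserts one edge of $E'\setminus E$, stays inside $\Gamma^C(N,M)$, and increases $|E \cap E'|$ by one; Proposition~\ref{pr:oneedge} then expands that $g$-swap into a finite string of $c$-swaps through connected graphs, and the induction hypothesis supplies the rest of the sequence from the intermediate graph to $\mathcal{G}'$.

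So the whole content is to produce this one ``good'' $g$-swap, and the only delicate point is preserving connectedness. Fix any $e \in E\setminus E'$ (possible since $s\ge 1$) and consider $\mathcal{G} - e$. If $\mathcal{G}-e$ is still connected, pick any $f \in E'\setminus E$ (again nonempty); since adding an edge to a connected graph leaves it connected, $\mathcal{G}'' := (V, (E\setminus\{e\})\cup\{f\})$ is connected, has $M$ edges, and lies in $\Gamma^C(N,M)$. If instead $\mathcal{G}-e$ is disconnected, then, $\mathcal{G}$ being connected, it has exactly two components, with vertex sets $V_1$ and $V_2$ partitioning $V$, and $e$ is the \emph{unique} edge of $\mathcal{G}$ joining $V_1$ to $V_2$. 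Since $\mathcal{G}'$ is connected, $E'$ must contain at least one edge $f$ joining $V_1$ to $V_2$; as $e$ is the only such edge in $E$ and $e\notin E'$, necessarily $f\ne e$, hence $f\in E'\setminus E$, and adding $f$ reconnects the two pieces, so again $\mathcal{G}'':=(V,(E\setminus\{e\})\cup\{f\})\in\Gamma^C(N,M)$.

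In both cases $e\notin E'$ while $f\in E'\setminus E$, from which a short set-theoretic check gives $|E(\mathcal{G}'')\cap E'| = |E\cap E'| + 1$, i.e.\ the pair $(\mathcal{G}'',\mathcal{G}')$ has parameter $s-1$. Moreover $\mathcal{G}$ and $\mathcal{G}''$ differ in exactly one edge and are both connected, so $\mathcal{G}\mapsto\mathcal{G}''$ is a bona fide $g$-swap. Applying the induction hypothesis to $\mathcal{G}''$ and $\mathcal{G}'$ and prepending the swap $\mathcal{G}\to\mathcal{G}''$ yields the desired sequence of $g$-swaps within $\Gamma^C(N,M)$ (with $d \le s$); by Proposition~\ref{pr:oneedge} each of these $g$-swaps is in turn a composition of $c$-swaps, which is exactly the ergodicity statement for the dynamics. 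I expect the bridge case — the bookkeeping that the intermediate graph remains connected — to be the one genuine obstacle; the rest is routine counting.
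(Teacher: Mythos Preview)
Your argument is correct and follows the same inductive skeleton as the paper's proof: induct on $s=M-|E\cap E'|$, at each step replace one edge of $E\setminus E'$ by one of $E'\setminus E$ via a single $g$-swap, and then appeal to Proposition~\ref{pr:oneedge}. The one substantive difference is that the paper simply picks \emph{arbitrary} $e_{ij}\in E\setminus E'$ and $e_{hk}\in E'\setminus E$, sets $E''=(E\setminus\{e_{ij}\})\cup\{e_{hk}\}$, and tacitly treats $\mathcal G''=(V,E'')$ as an element of $\Gamma^C(N,M)$ without verifying connectedness; your bridge argument (if $e$ is a cut edge, choose $f\in E'$ crossing the resulting cut, which forces $f\in E'\setminus E$) supplies exactly the check the paper omits. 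So your route is not genuinely different, but it is strictly more complete on the one point that actually needs care.
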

\begin{proof} Let $\mathcal{Z}=(V,Z=E\cap E')$, and let us define $\delta=|Z|$. We proceed by induction on the number $\delta$.
\begin{description}
\item[{\bf Base case}] If $\delta=M-1$, the Thesis is trivially true because of Proposition \ref{pr:oneedge}.
\item[{\bf Inductive step}] Let us assume that the Thesis holds for $\delta = M-d$, we want to show that this implies that it also holds for $\delta = M - d - 1$, with $d < M -1$. Let us assume that $\Gr = (V,E)$ and $\Gr' = (V,E')$ are such that $ | E' \cap  E | = M - d -1$. Let $e_{ij} \in E \setminus ( E \cap E' ) $ and $e_{hk} \in E' \setminus ( E \cap E' )$. Moreover, let $E'' = (E \setminus \{e_{ij}\} ) \cup \{ e_{hk} \}$.
By construction, $| E \cap E''| = M -1$ and $ | E' \cap E''| = M-d$. Finally, let $\Gr''= (V,E'')$.  Since $\Gr''$ and $\Gr'$ differ by $M-d$ edges, by inductive assumption there exists a sequence $\Gr^{i}$, with $i \in [0,d]$, such that $\Gr^0 = \Gr'$ and $\Gr^d = \Gr''$, that satisfies the Thesis. Moreover, by Proposition \ref{pr:oneedge}, there exists a $g$-swap such that $\Gr = \Sigma_{hk}^{ij}(\Gr'')$. Thus, the complete sequence $\Gr'=\Gr^0,\Gr^1,\cdots,\Gr''=\Gr^{d},\Gr=\Gr^{d+1}$ satisfies the Thesis.
\end{description}
\end{proof}
Proposition~\ref{pr:ergodicity} and proposition~\ref{pr:oneedge} state simply that any two connected graphs with the same number of edges can be obtained one from the other
applying a finite sequence of $c$-swaps. Moreover, since the number of edges is finite, then there must be a minimal sequence of $c$-swaps that connects any two of such graphs.
Since, for finite $\beta$, all $c$-swaps are allowed with nonzero probability, this proves the ergodicity. $\square$

\bibliographystyle{spmpsci}      
\bibliography{collection}

\end{document}